\newtheorem{theorem}{Theorem}
\newtheorem{lemma}{Lemma}
\newcommand{\Real}{\mathbb{R}}
\newcommand{\Natural}{\mathbb{N}}
\newcommand{\Integer}{\mathbb{Z}}
\newcommand{\St}{\mathcal{S}}
\newcommand{\state}[2]{\langle #1, #2 \rangle}
\newcommand{\stateY}[2]{\state{#1}{#2}^\mathcal{Y}}
\newcommand{\stateK}[2]{\state{#1}{#2}^\mathcal{K}}
\newcommand{\stateV}[2]{\state{#1}{#2}^\mathcal{V}}
\newcommand{\seg}[1]{\mathcal{H}_{#1}}
\newcommand{\s}{\tilde{s}}
\newcommand{\dil}{\tilde{\delta}}
\DeclareMathOperator{\jump}{jump}
\DeclareMathOperator{\dist}{dist}
\DeclareMathOperator{\cl}{cl}
\newcommand{\fig}[3]{
    \begin{figure}[ht]
        \centering
        \includegraphics{#3}
        \caption{#2}
        \label{#1}
    \end{figure}
}
\newif\ifnotesw\noteswtrue 
\begin{document}
\title{Linear Time Algorithm for Optimal Feed-link Placement}

\date{}
\author{
    {Marko Savi\'{c}
        \thanks{Department of Mathematics and Informatics, University of Novi Sad, Serbia. Email: marko.savic@dmi.uns.ac.rs.
        Partly supported by Ministry of Education and Science, Republic of Serbia}
    }
    \qquad
    {Milo\v{s} Stojakovi\'{c}
        \thanks{Department of Mathematics and Informatics, University of Novi Sad, Serbia. Email: milos.stojakovic@dmi.uns.ac.rs.
        Partly supported by Ministry of Education and Science, Republic of Serbia, and Provincial Secretariat for Science, Province of Vojvodina.}
    }
}

\maketitle

\begin{abstract}
Given a polygon representing a transportation network together with a point $p$ in its interior, we aim to extend
the network by inserting a line segment, called a feed-link, which connects $p$ to the boundary of the polygon.
Once a feed link is fixed, the geometric dilation of some point $q$ on the boundary is the ratio between the length of the shortest path from $p$ to
$q$ through the extended network, and their Euclidean distance. The utility of a feed-link is inversely proportional to
the maximal dilation over all boundary points.

We give a linear time algorithm for computing the feed-link with the
minimum overall dilation, thus improving upon the previously known algorithm of complexity that is roughly $O(n \log n)$.
\end{abstract}

\section{Introduction}

Depending on the requirements, there are many standard ways to connect a new node to an existing network. Probably the
most straightforward one is to simply snap the location to the closest point on the network. This may be unsuitable as
the node location is modified. Also, it can happen that two points geometrically close to each other are snapped to
parts of network that are far away, which may be undesirable. Another approach is to link all new nodes inside a
network face to the \emph{feed-node} which is then connected to the network. Alternatively, each new node can be
individually attached to the network using a \emph{feed-link}. This approach was taken e.g.\
in~\cite{dahlgren2006evaluation, dahlgren2007development}, where the new location is simply connected to the nearest
existing location by a feed-link.

In an attempt to reduce unnecessary detours, Aronov et al.\ in~\cite{aronov2012connect} introduce a more sophisticated
way of choosing where on the existing network to attach the new feed-link, using the so-called \emph{dilation} to
measure the quality of feed-links. For a planar embedding of a graph and two different points $p$ and $q$ on it, we
define the \emph{detour} (sometimes called slightly less formally the \emph{crow flight conversion coefficient}) as the
ratio of the minimum distance of points $p$ and $q$ staying on the graph, and their Euclidean (crow flight) distance.
The \emph{geometric dilation} of the graph is the maximum detour taken over all pairs of points on the graph. For a
detailed view on geometric dilation and related concepts, we refer the reader to~\cite{ebbers2006geometric}.

Following the approach presented in~\cite{aronov2012connect}, our goal is to attach a new node positioned inside a face
of an existing network in a suitable way. The face is given as the boundary of a polygon $P$ and the new node is
a point $p$ inside the polygon. We want to connect the point $p$ to a point on the polygon boundary $P$ using a single
line segment. Such connection is called a feed-link. Note that a feed-link may have more than one point in the
intersection with the polygon boundary, but we do not regard these points as connection points. An optimal feed-link is
the one that minimizes the maximum detour ratio from point $p$ to a point on the boundary.

The problem of finding the optimal feed-link is analyzed in~\cite{aronov2012connect} and an algorithm that runs in
$O(\lambda_7(n) \log n)$ time is presented, where $n$ is the number of vertices on the boundary of the face, and
$\lambda_7(n)$ is the maximum length of a Davenport-Schinzel sequence of order 7 on $n$ symbols, a slightly superlinear
function. Here, we make an improvement by presenting an $O(n)$ time algorithm that finds an optimal feed-link.

Although the initial problem statement assumes that $p$ lies inside the polygon and the polygon is simple, all of our calculations work out
exactly the same for an arbitrary point $p$ in the plane and arbitrary polygons, possibly self-intersecting, and we obtain the same result in that more general setting.

The rest of the paper is organized as follows. In Section~\ref{sec:NotationAndProblemStatement} we introduce notation and a formal definition of the problem. An alternative view that will help us with the analysis is presented in Section~\ref{sec:AnotherViewOfTheProblem}. Next, in Section~\ref{sec:SlidingLeverAlgorithm} we describe an algorithm that outputs a discretized descriptions of the components and finally, in Section~\ref{sec:MergingTheTwoDilations}, we show how to combine those outputs to give the solution of the original problem.

\section{Notation and problem statement}
\label{sec:NotationAndProblemStatement}

A polygon, which is not necessarily simple, is given as the list of its vertices $v_0, v_1, \ldots, v_{n-1}$ in the plane. By $P$ we denote only the boundary of that polygon. We are also given the point $p$ lying in the same plane as $P$. A \emph{feed-link} is a line segment $pq$, connecting $p$ with some point $q \in P$.

For any two points $q, r \in P$ \emph{dilation of $r$ via $q$} is defined as
\[ \delta_q(r) = \frac{|pq| + \dist(q,r)}{|pr|} ,\]
where $\dist(q,r)$ is the length of the shortest route  between $q$ and $r$ over the polygon's boundary, and $|ab|$ is the Euclidean distance between points $a$ and $b$, see Figure~\ref{fig:1}.
\fig{fig:1}{The concept of dilation.}{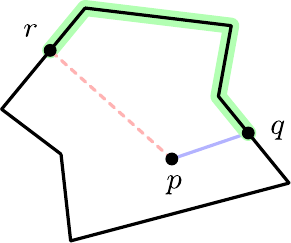}

For a point $q \in P$, \emph{dilation via $q$} is defined as
\[ \dil_q = \max_{r \in P} \delta_q(r) .\]

The problem of finding the optimal feed-link is to find $q$ such that $\dil_q$ is minimized.

\subsection{Left and right dilation}

Given two points $a, b \in P$, $P[a,b]$ is the portion of $P$ obtained by going from $a$ to $b$ around the polygon in the positive direction, including points $a$ and $b$. Let $\mu(a,b)$ be the length of $P[a,b]$, and $\mu(P)$  the perimeter of $P$.

For given $a \in P$, let $a'$ be the point on $P$, different from $a$, for which $\mu(a,a') = \mu(a',a) = \mu(P)/2$, see Figure~\ref{fig:2}. By $P^+[a]$ we denote $P[a,a']$, and by $P^-[a]$ we denote $P[a',a]$. Obviously, $P^+[a] \cup P^-[a] = P$ and $P^+[a] \cap P^-[a] = \{a,a'\}$.
\fig{fig:2}{Left and right portion of $P$ observed from point $a$.}{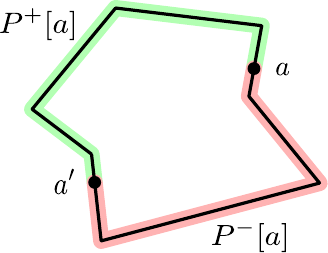}

Given point $q \in P$ and $r \in P^+[q]$, the \emph{left dilation of $r$ via $q$} is defined as
\[ \delta^+_q(r) = \frac{|pq| + \mu(q,r)}{|pr|} .\]
On the other hand, for $r \in P^-[q]$, the \emph{right dilation of $r$ via $q$} is defined as
\[ \delta^-_q(r) = \frac{|pq| + \mu(r,q)}{|pr|} .\]

When measuring $\dist(q,r)$, the shortest path from $q$ to $r$ over $P$ must lie entirely either in $P^+[q]$ or $P^-[q]$. This allows us to express the dilation of $r$ via $q$ using left and right dilations of $r$ via $q$
\[
    \delta_q(r) =
        \begin{cases}
             \delta^+_q(r), & \mbox{if $r \in P^+[q]$} \\
             \delta^-_q(r), & \mbox{if $r \in P^-[q]$} \\
        \end{cases}.
\]

Given point $q \in P$, the \emph{left dilation via $q$} is defined as $\dil^+_q = \max_{r \in P^+[q]} \delta^+_q(r)$, and the \emph{right dilation via $q$} as $\dil^-_q = \max_{r \in P^-[q]} \delta^-_q(r)$. Finally, the dilation via $q$ can be expressed as
\begin{equation}
\label{eq:dilationIsMaximum}
\dil_q = \max(\dil^+_q, \dil^-_q) = \max_{r \in P}\delta_q(r).
\end{equation}

In the following two sections we will be concerned only with the left
dilation; as the problem of finding the right dilation will turn out to be analogous. In
Section~\ref{sec:MergingTheTwoDilations} we will show how to combine our findings
about the left and right dilation to provide the answer to the original
question.
To simplify the notation, we will not use the superscript $+$ in Sections~\ref{sec:AnotherViewOfTheProblem} and~\ref{sec:SlidingLeverAlgorithm} assuming that we deal with the left dilation.

\section{Another view of the problem}
\label{sec:AnotherViewOfTheProblem}

We parametrize points on $P$ by defining $P(t)$, $t \in \Real$, to be the point on $P$ for which $\mu(v_0,P(t)) \equiv t \pmod{\mu(P)}$, see Figure~\ref{fig:3}.
\fig{fig:3}{Parametrization of $P$.}{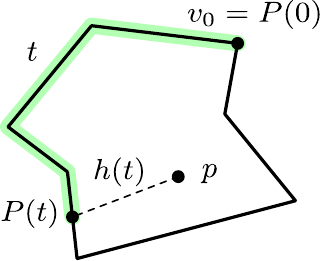}

The distance of a point to points on a straight line is known to be
a hyperbolic function. The plot of the distance function $h(t) :=
|pP(t)|$ is an infinite sequence of hyperbola segments joined at
their endpoints, where $(kn+r)$-th hyperbola segment corresponds to
the $r$-th side of $P$, for $r \in \{0, 1, \ldots, n-1\}$ and $k \in
\Integer$, see Figure~\ref{fig:4}. For each $i = kn+r$, hyperbola
$h_i$ is of the form $h_i(t) = \sqrt{(t-m_i)^2+d_i^2}$, for some
values $m_i$ and $d_i$, so that $m_{kn+r} = m_r + k\mu(P)$,
$d_{kn+r} = d_r$, and $d_r$ is the distance between $p$ and the line
containing the $r$-th side of $P$. The left endpoint of $i$-th
hyperbola segment is $E_i := (e_i, h(e_i))$, where $e_i = k\mu(P) +
\mu(v_r)$, and the right endpoint is at $(e_{i+1}, h(e_{i+1}))$. We
will consider that each hyperbola segment contains its left
endpoint, but not the right endpoint. By $H(t)$ we denote the point
on the plot of $h$ corresponding to the parameter $t$, so $H(t) :=
(t, h(t))$. The plot is, obviously, periodic, with the period of
$\mu(P)$, that is, $H(t) = H(t + k\mu(P))$. We denote the $i$-th
hyperbola segment with $\seg{i}$. \fig{fig:4}{The plot of
$h(t)$.}{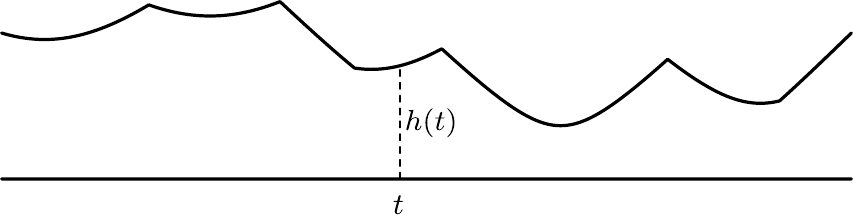}

Let $o(t) = t - h(t)$, and $O(t) = (o(t), 0)$, see
Figure~\ref{fig:5}. We also define $o_i(t) = t - h_i(t)$, and
$O_i(t) = (o_i(t), 0)$. Given points $q\in P$ and $r\in P^+[q]$, we
have their corresponding parameters $t_q$ and $t_r$, such that $t_q
\leq t_r \leq t_q + \mu(P)/2$. The slope of the line passing through
points $O(t_q)$ and $H(t_r)$ is
\begin{equation}
\begin{aligned}
s(t_q, t_r) &:= \text{slope}\left(\ell(O(t_q),H(t_r))\right) \\
            &= \frac{h(t_r)}{t_r - t_q + h(t_q)} \\
            &= \frac{|pP(t_r)|}{\mu(P(t_q),P(t_r)) + |pP(t_q)|} \\
            &= \frac{1}{\delta^+_{P(t_q)}(P(t_r))}, \\
\end{aligned}
\end{equation}
hence the slope between
$O(t_q)$ and $H(t_r)$ is equal to the inverse of the left dilation
of $r$ via $q$.

\fig{fig:5}{Dilation and slope relation.}{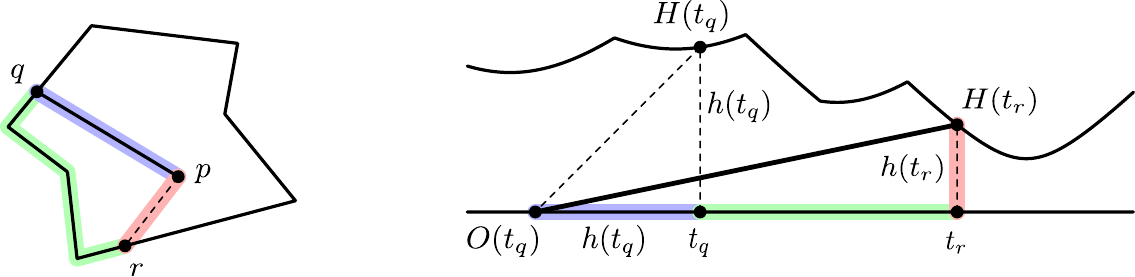}

We define $\s(t_q)$ to be the lowest slope from $O(t_q)$ to $H(t_r)$ among all $t_r \in [t_q, t_q + \mu(P)/2]$. From the previous observation it follows that this slope equals the inverse of the left dilation via
$q$,
\begin{equation}
\label{eq:dilationSlopeConnection}
\begin{aligned}
\s(t_q) &:= \min_{t_r \in [t_q, t_q + \mu(P)/2]} s(t_q, t_r) \\
        &= \min_{t_r \in [t_q, t_q + \mu(P)/2]} \frac{1}{\delta^+_{q}(r)} \\
        &= \frac{1}{\max_{r \in P^+[q]} \delta^+_{q}(r)} \\
        &= \frac{1}{\dil^+_{q}}.
\end{aligned}
\end{equation}

Obviously, $\s(t) \in (0, 1]$ because it is strictly positive and $\s(t) \leq s(t, t) = 1$. 
This enables us to estimate dilation by looking at the slope of the line we just defined.

\begin{lemma}
    \label{lem:plotSlope}
    For any two distinct values of $t_1$ and $t_2$,
    \[|(h(t_2) - h(t_1)) / (t_2 - t_1)| \leq 1. \]
\end{lemma}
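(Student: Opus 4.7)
The statement asserts that the distance function $h$ is $1$-Lipschitz. My plan is to combine three elementary inequalities: the reverse triangle inequality (the Euclidean distance to a fixed point is $1$-Lipschitz), the fact that the straight segment between two points of $P$ is no longer than any arc of $P$ joining them, and the observation that $P(\cdot)$ is parametrized by arc length.

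First I would apply the reverse triangle inequality to the triangle with vertices $p$, $P(t_1)$, $P(t_2)$, obtaining
\[
|h(t_2) - h(t_1)| = \bigl| |pP(t_2)| - |pP(t_1)| \bigr| \leq |P(t_1)P(t_2)|.
\]
Next I would bound $|P(t_1)P(t_2)|$ by the length of the shorter of the two arcs of $P$ between $P(t_1)$ and $P(t_2)$, since a straight segment is never longer than any path connecting its endpoints. Finally, because the parametrization has period $\mu(P)$ and measures arc length from $v_0$, the shorter arc has length $\min(r,\mu(P)-r)$, where $r = |t_2-t_1| \bmod \mu(P)$; this is always at most $r \leq |t_2-t_1|$. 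Chaining the three inequalities yields the claim.

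The only point that needs care is the wrap-around caused by the periodicity of $P(\cdot)$: when $|t_2-t_1|$ exceeds $\mu(P)/2$, the parameter increment no longer corresponds to the shorter arc. This is, however, just bookkeeping rather than a substantive difficulty, and the uniform bound $\min(r,\mu(P)-r) \leq |t_2-t_1|$ handles all regimes at once.
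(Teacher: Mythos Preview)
Your proof is correct, and it takes a genuinely different route from the paper's.

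The paper argues analytically: each piece $h_i(t)=\sqrt{(t-m_i)^2+d_i^2}$ has $|h_i'(t)|\le 1$, so $h$ is differentiable away from the breakpoints $e_i$ with derivative bounded by~$1$ in absolute value, and the Lipschitz bound follows. Your argument is purely metric: the reverse triangle inequality gives $|h(t_2)-h(t_1)|\le |P(t_1)P(t_2)|$, and the arc-length parametrization bounds the Euclidean distance by $|t_2-t_1|$. Your approach is more elementary (no differentiation, no explicit hyperbola formula) and more robust --- it works for any arc-length parametrized curve in place of the polygon boundary, and it sidesteps the edge case $d_i=0$ (point $p$ on a supporting line of a side), where the paper's strict inequality $|h'(t)|<1$ actually fails and only $|h'(t)|\le 1$ holds. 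The paper's approach, on the other hand, ties in more directly with the hyperbola description used elsewhere.

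One minor simplification: the wrap-around discussion is unnecessary. The map $t\mapsto P(t)$, restricted to $[t_1,t_2]$ (say $t_1<t_2$), is a path of length exactly $t_2-t_1$ from $P(t_1)$ to $P(t_2)$, possibly winding around $P$ several times; the straight segment is no longer than any such path, so $|P(t_1)P(t_2)|\le |t_2-t_1|$ directly, without computing the shorter arc.
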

\begin{proof}
Function $h(t)$ is continuous and, as a union of countably many
segments of hyperbolas with first derivatives less or equal to one
in absolute value, is differentiable almost everywhere having
$|h'(t)| < 1$ for each $t \in \Real \setminus \{e_i : i \in \Natural
\}$. This property readily implies the statement of the lemma.
\end{proof}

So far, $\s(t_q)$ was defined as minimum only among slopes $s(t_q,
t_r)$ where $t_r$ belongs to the interval $[t_q, t_q + \mu(P)/2]$. However, from
Lemma~\ref{lem:plotSlope} follows that $s(t_q, t_r)$ cannot be less
than $1$ when $t_r \in [o(t_q), t_q]$, and $\s(t_q)$ is at most $1$, so this interval can be extended, and we have
\[ \s(t_q) = \min_{t_r \in [o(t_q), t_q + \mu(P)/2]} s(t_q, t_r) .\]

\section{Sliding lever algorithm}
\label{sec:SlidingLeverAlgorithm}

\subsection{Lever}

For a fixed $t$, consider the line segment having slope $\s(t)$, with
one endpoint at $O(t)$ and the other at $(t+\mu(P)/2,
\s(t)(\mu(P)/2+h(t)))$, see Figure~\ref{fig:6}. Let us call that line
segment the \emph{lever for $t$}. Note that the lever only touches
the plot, never intersecting it properly.

Let $C(t)$ be the leftmost point in which the lever for $t$ touches
the plot, and let $c(t)$ be such that $H(c(t))=C(t)$. Then $c(t) \in
[o(t), t+\mu(P)/2]$ and it is the lowest value in this interval for
which $\s(t) = s(t, c(t))$. Coming back to the original setup, this
means that left dilation via $P(t)$ reaches its maximum for
$P(c(t))$. \fig{fig:6}{Lever.}{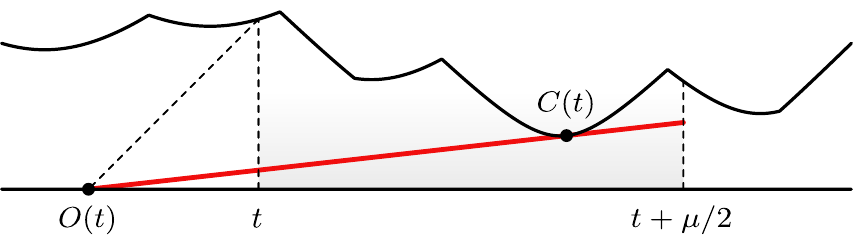}

We now continuously decrease parameter $t$ and observe what is
happening with the updated lever. The following monotonicity lemma
states that when $t$ is decreasing $o(t)$ and $c(t)$ are decreasing
as well, which means that decreasing $t$ corresponds to ``dragging"
the lever in the leftward direction.

\begin{lemma}
\label{lem:monotonicity} For $t_1 < t_2$ we have $o(t_1) \leq o(t_2)$ and $c(t_1) \leq c(t_2)$.
\end{lemma}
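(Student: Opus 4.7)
My plan is to prove the two monotonicity claims in order. The inequality $o(t_1)\le o(t_2)$ is an immediate corollary of Lemma~\ref{lem:plotSlope}: writing $o(t_2)-o(t_1)=(t_2-t_1)-(h(t_2)-h(t_1))$, the lemma gives $h(t_2)-h(t_1)\le t_2-t_1$, so $o(t_2)\ge o(t_1)$.

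For the inequality $c(t_1)\le c(t_2)$, I will argue by contradiction and assume $c(t_2)<c(t_1)$. A short verification using $o(t_1)\le o(t_2)$ (just proved) together with the defining inclusions $o(t_j)\le c(t_j)\le t_j+\mu(P)/2$ shows that $c(t_1)$ lies in the window $[o(t_2),t_2+\mu(P)/2]$ of $t_2$ and $c(t_2)$ lies in the window $[o(t_1),t_1+\mu(P)/2]$ of $t_1$, so both $s(t_1,c(t_2))$ and $s(t_2,c(t_1))$ are legitimate competitors in the two minimizations. Since $c(t_1)$ is, by definition, the \emph{leftmost} point where the lever for $t_1$ touches the plot, every point of the plot strictly to the left of $c(t_1)$ inside that window must lie strictly above the lever; applied to $c(t_2)$ this yields the strict inequality $s(t_1,c(t_2))>\s(t_1)$. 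The analogous (non-strict) competitor inequality for $t_2$ gives $s(t_2,c(t_1))\ge\s(t_2)$.

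Writing each slope as $h(\cdot)/(\cdot-o(\cdot))$ and cross-multiplying (all denominators are positive, since $\s(t)\le 1$ forces $c(t_j)>o(t_j)$ for $j=1,2$) converts the two slope inequalities into two polynomial inequalities in $h(c(t_1))$, $h(c(t_2))$ and the four differences $c(t_i)-o(t_j)$. A short elimination of the two $h$-values collapses the system to the single relation $(o(t_1)-o(t_2))(c(t_1)-c(t_2))>0$; combined with the assumption $c(t_1)>c(t_2)$, this forces $o(t_1)>o(t_2)$, contradicting the first part of the lemma.

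The main obstacle I anticipate is that the naive pairwise-slope comparison---using both competitor inequalities only in their non-strict form---merely produces the consistent chain $\s(t_1)\le s(t_1,c(t_2))\le s(t_2,c(t_2))=\s(t_2)\le s(t_2,c(t_1))$ and no contradiction at all. What makes the argument close is the strictness of $s(t_1,c(t_2))>\s(t_1)$ coming from the leftmost-contact definition of $c(t_1)$, together with the algebraic reduction that bundles $o(t_1)-o(t_2)$ and $c(t_1)-c(t_2)$ into a single product, which then collides cleanly with the already-proven first inequality of the lemma.
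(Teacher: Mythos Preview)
Your argument is correct. The first inequality $o(t_1)\le o(t_2)$ is handled exactly as in the paper, and for $c(t_1)\le c(t_2)$ your algebraic reduction goes through: with $a=h(c(t_1))$, $b=h(c(t_2))$ and the four positive differences $c(t_i)-o(t_j)$, the strict inequality $s(t_1,c(t_2))>\s(t_1)$ and the weak one $s(t_2,c(t_1))\ge \s(t_2)$ multiply out to
\[
(c(t_1)-o(t_1))(c(t_2)-o(t_2))>(c(t_2)-o(t_1))(c(t_1)-o(t_2)),
\]
whose difference factors as $(c(t_1)-c(t_2))(o(t_1)-o(t_2))>0$, forcing $o(t_1)>o(t_2)$ and the desired contradiction. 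One small nit: the justification ``$\s(t)\le 1$ forces $c(t_j)>o(t_j)$'' is not quite the right reason; what you actually need is $h(c(t_j))>0$ together with finiteness of the slope, which does give strict positivity of the denominators. The window checks $c(t_1)\in[o(t_2),t_2+\mu(P)/2]$ and $c(t_2)\in[o(t_1),t_1+\mu(P)/2]$ are fine as stated.

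Your route differs from the paper's. The paper argues geometrically by a case split on whether the two lever segments $O(t_1)C(t_1)$ and $O(t_2)C(t_2)$ intersect: if they do not, then (using $o(t_1)\le o(t_2)$) the lever for $t_2$ lies entirely below the lever for $t_1$, yet touches the plot, producing a plot point strictly below the lever for $t_1$ to the left of $C(t_1)$; if they do intersect, then $C(t_1)$ lies strictly below the lever for $t_2$, giving $s(t_2,c(t_1))<\s(t_2)$ directly. Your version trades that picture-driven case analysis for a single algebraic identity, at the cost of having to track the strict/non-strict bookkeeping and the positivity of the cross-multiplication. Both arguments rely on the same two key facts (monotonicity of $o$ and the leftmost-contact property of $c(t_1)$); the paper's is more visual, yours avoids the case split.
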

\begin{proof}
Suppose $t_1 < t_2$. Using Lemma~\ref{lem:plotSlope} we get
\begin{align*}
(h(t_2) - h(t_1))/(t_2 - t_1) &\leq 1, \\
t_1 - h(t_1) &\leq t_2 - h(t_2), \\
o(t_1) &\leq o(t_2). \\
\end{align*}
To show that $c(t_1) \leq c(t_2)$, assume the opposite, that $c(t_1)
> c(t_2)$. Then, $t_1 < t_2 < c(t_2) < c(t_1) \leq t_1 + \mu(P)/2$,
and $c(t_1) \in [t_2, t_2 + \mu(P)/2]$.

Suppose first that the line segments $O(t_1)C(t_1)$ and
$O(t_2)C(t_2)$ do not intersect. Since $o(t_1) \leq o(t_2)$, the
point $O(t_2)$ is not above $O(t_1)C(t_1)$, so the segment
$O(t_2)C(t_2)$ lies completely under $O(t_1)C(t_1)$. Since
$O(t_2)C(t_2)$ touches the plot, the plot must intersect
$O(t_1)C(t_1)$ in some point left of $t_1$, which is a contradiction
since $C(t_1)$ is the leftmost point where the lever for $t_1$
touches the plot.

Since the line segments $O(t_1)C(t_1)$ and $O(t_2)C(t_2)$ do intersect, the point $C(t_1)$ lies under the segment $O(t_2)C(t_2)$. Thus we have
\[
s(t_2, c(t_1)) < s(t_2, c(t_2))
               = \min_{t \in [t_2, t_2 + \mu(P)/2]} s(t_2, t)
               \leq s(t_2, c(t_1)),
\]
which is a contradiction. This concludes the proof of the lemma.
\end{proof}

\subsection{States}

In order to be able to simulate the continuous leftward motion of the lever, we transform it to an iteration over a discrete sequence of states. We define different lever states depending on how the lever is positioned relative to the sequence of hyperbola segments.

When $t \in [e_i, e_{i+1})$ and $c(t) \in [e_j, e_{j+1})$, we say
that the lever for $t$ is in \emph{phase} $\state{i}{j}$. Phase in which is the lever, together with the manner in which the lever touches the plot define the \emph{state of the lever}. There are three possible ways for the lever to touch the plot, denoted by $\mathcal{K}$ (arc tangency), $\mathcal{Y}$ (endpoint sliding), and $\mathcal{V}$ (wedge touching).

\begin{itemize}
\item State $\stateK{i}{j}$ : $c(t) < t + \mu(P)/2$ and the lever is the tangent to $\seg{j}$.

When $t$ is decreasing, the lever is sliding to the left along the
$h_j$ maintaining the tangency, thus continuously decreasing the
slope.

\fig{fig:7k}{State $\stateK{i}{j}$}{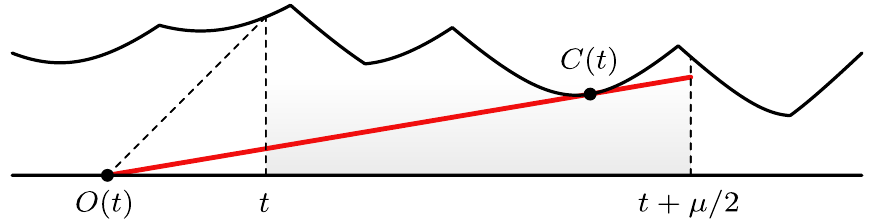}

\item State $\stateY{i}{j}$ : $c(t) = t + \mu(P)/2$.

Point $C(t)$ is the right endpoint of the lever. It is the only
point where lever touches the plot. When $t$ is decreasing, the
lever is moving to the left while keeping its right endpoint on
$h_j$.

\fig{fig:7y}{State $\stateY{i}{j}$}{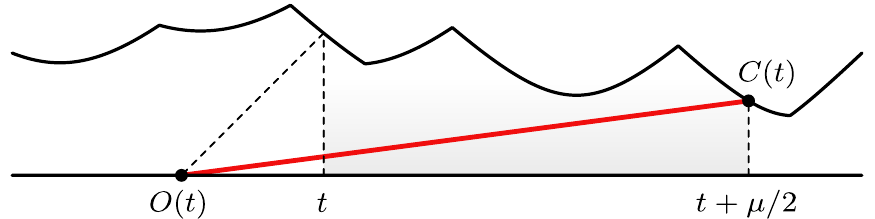}

\item State $\stateV{i}{j}$ : $c(t) < t + \mu(P)/2$ and the lever is passing through the point $H(e_j)$, the endpoint between hyperbola segments $\seg{j-1}$ and $\seg{j}$.

This situation occurs only if $m_{j-1} > m_j$. The two neighboring
hyperbola segments then form a ``wedge" pointing downwards, and when
$t$ is decreasing the lever is sliding to the left while maintaining
the contact with the tip of that wedge, thus continuously decreasing
the slope.

\fig{fig:7v}{State $\stateV{i}{j}$}{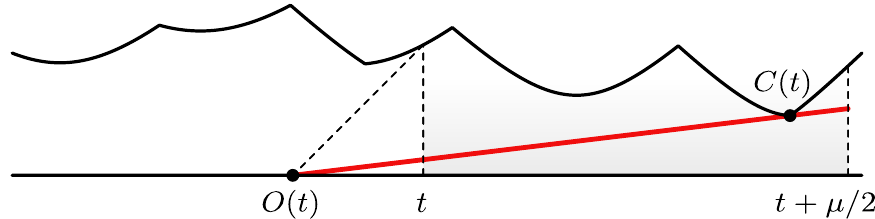}

\end{itemize}

\subsection{Events}

In the process of decreasing $t$ and dragging the left endpoint of the lever towards left, the lever state changes at certain moments. We call such events \emph{state transition events}. It is crucial for us to be able to efficiently calculate where those events can occur. If current lever position, $t_c$, and the current state are known, the following event can be determined by maintaining the set of conceivable future events of which at least one must be realized, and proceeding to the one that is first to happen, i.e.\ the one with the largest $t$ not larger than $t_c$. To do that, we must know how to calculate the value of $t$ for each of those events.

\subsubsection{Jumping and retargeting}

We will first devote some attention to the most challenging kind of events, which we call \emph{jumping events}. These are events in which $C(t)$ abruptly changes its position by switching to a different hyperbola segment. In order to efficiently find state transition events that include jumps, we always need to know to which hyperbola segment we can jump to from the current position. There is always at most one such target hyperbola segment, and we will show how to keep track of it.

Consider some point $H(x)$ on the plot. Let $\jump(x)$, \emph{the jump destination for $x$}, be the index of a hyperbola segment which contains the rightmost point $H(w)$ on the plot such that $w < x$ and the ray from $H(x)$ through $H(w)$ only touches the plot, but does not intersect it properly. That is, $\jump(x)$ is the index of the lowest visible hyperbola segment when looking from the point $H(x)$ to the left. If there is no such $w$, because hyperbola segments on the left are obscured by the segment containing $H(x)$, then we set $\jump(x)$ to be the index of the hyperbola segment containing $H(x)$.

Consider only the values of $x$ at which $\jump(x)$ changes value.
We call such values \emph{retargeting positions}, and points $H(x)$
\emph{retargeting points}, see Figure~\ref{fig:tangent}. There are two types of retargeting points. \emph{Retargeting points of the first type} are the points on $\seg{k}$ in which jump destination changes from $i$ to $j$, where $i < j < k$, (Figure~\ref{fig:tangent1}). \emph{Retargeting points of the second type} are the points on $\seg{k}$ in which jump destination changes from $k$ to $j$, where $j < k$, (Figure~\ref{fig:tangent2}).

\begin{figure}[ht]
	\centering
	\subfigure[Retargeting point of the first type.]{
		\label{fig:tangent1}
		\includegraphics{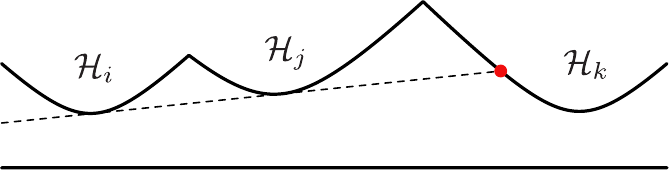}
	}
   	\hspace{10pt}
	\subfigure[Retargeting point of the second type.]{
		\label{fig:tangent2}
		\includegraphics{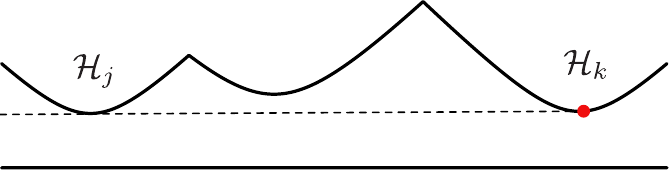}
	}
	\caption{Two types of retargeting points.}
	\label{fig:tangent}
\end{figure}

\begin{theorem}
\label{thm:RetargetingPositions}
    It is possible to find all retargeting positions, ordered from left to right, together with jump destinations of those positions, in $O(n)$ time.
\end{theorem}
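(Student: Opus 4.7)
The plan is to compute all retargeting positions by a single left-to-right sweep through the hyperbola segments $\seg{0}, \seg{1}, \ldots, \seg{n-1}$. During the sweep we maintain a stack $S$ of previously processed segments that can still serve as jump destinations for future query points. When it is time to process $\seg{k}$, we consult $S$ to emit every retargeting event that falls inside $\seg{k}$, update $S$ to reflect occlusion caused by $\seg{k}$, and then push $\seg{k}$ onto $S$ for later iterations. The output is produced in natural left-to-right order, together with the jump destination attached to each event.

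The sweep relies on a common-tangent characterization of the two event types. Since each $h_i(t)=\sqrt{(t-m_i)^2+d_i^2}$ is strictly convex, the common external tangent between any two segments $\seg{i}$ and $\seg{j}$ is unique when it exists and can be computed in $O(1)$ from the parameters $(m_i,d_i)$ and $(m_j,d_j)$. A Type~2 retargeting on $\seg{k}$ occurs exactly where the tangent to $\seg{k}$, extended to the left, first grazes the topmost still-visible segment of $S$; it therefore coincides with the common external tangent of $\seg{k}$ with that segment. A Type~1 retargeting on $\seg{k}$ occurs exactly where the line through $H(x)\in\seg{k}$ coincides with the common external tangent of two previous segments $\seg{i}$ and $\seg{j}$ with $i<j<k$, that is, the moment at which $\seg{j}$ begins to occlude $\seg{i}$ as seen from $\seg{k}$.

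Processing of $\seg{k}$ then looks as follows. First, we pop from $S$ every segment that is entirely occluded by $\seg{k}$ already at the left endpoint $x=e_k$; those pops correspond to segments that are invisible to all future query points. Next, we let $x$ advance from $e_k$ to $e_{k+1}$ and repeatedly compare the next Type~2 candidate (common tangent of $\seg{k}$ and the top of $S$) with the next Type~1 candidate (common tangent of the top two elements of $S$, projected onto $\seg{k}$); we take whichever lies further to the left, emit the corresponding retargeting position with its jump destination, update $S$ accordingly, and continue. Once $x$ reaches $e_{k+1}$, we push $\seg{k}$ onto $S$ and proceed to $\seg{k+1}$.

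The main obstacle is the amortized analysis. Each of the $n$ hyperbola segments is pushed onto $S$ exactly once and popped at most once, so the total number of stack operations over the whole sweep is $O(n)$. Every emitted retargeting position is naturally charged either to the segment whose processing triggered it or to the unique pop it induces, and each event needs only a single common-tangent computation, i.e.\ $O(1)$ work. The key correctness ingredient is that, during the processing of a single $\seg{k}$, the produced events really are in left-to-right order along $\seg{k}$; this follows from the convexity of the $h_i$'s together with a monotonicity argument in the spirit of Lemma~\ref{lem:monotonicity}. With this ordering in hand, a short invariant-preservation argument shows that $S$ at every step correctly represents the set of segments still relevant to the remaining portion of the plot, and combining this with the $O(n)$ operation count yields the stated linear-time bound.
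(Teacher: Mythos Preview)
Your approach is essentially the same as the paper's: a left-to-right stack-based sweep in the style of Andrew's monotone-chain convex hull, using common tangents of consecutive stack elements to detect retargeting points, with linear time following from the push-once/pop-once amortization. The paper calls the stack invariant the \emph{convex support}---the sequence of hyperbola segments appearing on the lower convex hull of those processed so far---which is exactly your ``segments that can still serve as jump destinations.''

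There is, however, one genuine gap. You sweep over $\seg{0},\ldots,\seg{n-1}$ starting from an empty stack, but the plot is periodic: a point $H(x)$ near the left end of $\seg{0}$ may have its jump destination in a segment of the \emph{previous} period (index $<0$), and your sweep has no way to discover those retargeting events. The paper avoids this by anchoring the sweep at a global minimum of $h$: it initializes the stack with $\seg{j_0-1},\seg{j_0}$, where $\seg{j_0}$ contains a global minimum, and then processes $\seg{j_0+1},\ldots,\seg{j_0+n}$. Since no point can ``see past'' a global minimum when looking leftward, $\seg{j_0}$ is never popped and the wraparound is handled cleanly. Without this choice of starting point your algorithm is incomplete.

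A smaller point: your correctness paragraph defers the ordering and invariant-preservation arguments to ``a short argument in the spirit of Lemma~\ref{lem:monotonicity},'' but that lemma concerns monotonicity of $o(t)$ and $c(t)$ for the lever and does not directly give what you need here. The paper's ordering argument is self-contained: within the processing of a single $\seg{k}$, the successive tangent lines have strictly decreasing slope and each lies below the previous one under $\seg{k}$, forcing the intersection points to march rightward along $\seg{k}$.
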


\begin{proof}
    The algorithm for finding all retargeting points is similar to finding the lower convex chain in Andrew's monotone chain convex hull algorithm \cite{andrew1979another}. Our algorithm, however, runs in linear time because the sequence of hyperbola segments is already sorted. Before we give the algorithm, we describe the process and the supporting structure in more detail.

Given a set $A$ of hyperbola segments, we take a look at the convex hull of their union and divide its boundary into the upper and lower part (i.e., the part lying above the segments, and the one below the segments). We are interested only in those hyperbola segments from $A$ that have a nonempty intersection with the lower part of the convex hull boundary, having in mind that each hyperbola segment contains its left endpoint, but not the right one. Let us call the sequence of all such hyperbola segments, ordered from left to right, the \emph{convex support} for $A$, see Figure~\ref{fig:convex}.

\fig{fig:convex}{Convex support for all shown hyperbola segments is marked with solid lines.}{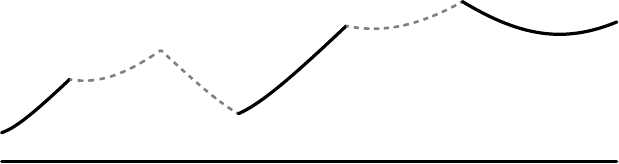}

We say that three hyperbola segments from the plot of $h$ are in convex position if no line segment connecting a point from the left and a point from the right hyperbola segment passes completely below the middle hyperbola segment. Note that any three hyperbola segments of any convex support are in convex position.

Let $j_0$ be the index of the hyperbola segment which contains any of the global minima of the plot of $h$. Starting from $\{\seg{j_0-1},\seg{j_0}\}$, we process segments from left to right and maintain the convex support for the set $\{\seg{j_0-1}, \seg{j_0}, \ldots, \seg{k}\}$, where $k$ is the index of the segment being processed.

We will use a stack to represent the convex support (only the indices of hyperbola segments are stored). Suppose the stack already contains the convex support for $\{\seg{j_0-1}, \seg{j_0}, \ldots, \seg{k-1}\}$, and we want to add a new segment $\seg{k}$. We must make changes to the stack, so that it now represents a convex support for the new, extended, set $\{\seg{j_0-1}, \seg{j_0}, \ldots, \seg{k}\}$. To achieve this, we pop segments from the stack until the last two segments still in the stack, together with $\seg{k}$, are in convex position. (Note that $\seg{j_0}$ will never be popped this way, as it contains a global minimum.) Finally, in case $\seg{k}$ belongs to the convex support of $\{\seg{j_0-1}, \seg{j_0}, \ldots, \seg{k}\}$, we push it on the stack .

Let $\cl(X)$ denote the closure of a point set $X$, so $\cl(\seg{i}) = \seg{i} \cup \{E_{i+1}\}$, and let us consider the line $l$ touching both $\cl(\seg{i})$ and $\cl(\seg{j})$, $i < j$, from below. If such a line is not unique, which can possibly happen only when $j=i+1$, we take $l$ to be the line with the smallest slope (that is, the line tangent to $\cl(\seg{i})$ in $E_{i+1}$). We call the line $l$ the \emph{common tangent} of $\seg{i}$ and $\seg{j}$. It can be computed in constant time, and in the following algorithm it is obtained as the return value of the function $\Call{Tangent}{i, j}$.

Note that if $Z$ is a point with larger first coordinate than the point $l \cap \cl(\seg{j})$ (i.e.\ $Z$ is to the right of $l \cap \cl(\seg{j})$) and below the graph, then the point $Z$ sees $\seg{i}$ lower than $\seg{j}$ if $Z$ is below $l$, and $\seg{i}$ lower than $\seg{j}$ if $Z$ is above $l$. We will use this fact in our analysis.

Now we are ready to present Algorithm~\ref{alg:RetargetingPoints} that shows how we get retargeting points as intersections of hyperbola segments and common tangents of successive segments from the convex support.

\begin{algorithm}
\caption{Retargeting Points}
\label{alg:RetargetingPoints}
\begin{algorithmic}
    \State $RetargetingPoints \leftarrow [\ ]$
    \State \Call{Push}{$j_0-1$}
    \State \Call{Push}{$j_0$}
    \For {$k \leftarrow j_0 + 1$ to $j_0 + n$}
        \Loop
            \State $i \leftarrow$ Second-to-top element of the stack
            \State $j \leftarrow$ Top element of the stack
            \State $l_1 \leftarrow \Call{Tangent}{i, j}$
            \If {$l_1 \cap \seg{k} \neq \emptyset$}
                \State Let $g$ be the leftmost point of $l_1 \cap \seg{k}$.
                \State Append $g$ to $RetargetingPoints$, and set jump destination of $g$ to $j$.
                \State $\Call{Pop}$
            \Else
	            \State $l_2 \leftarrow \Call{Tangent}{j, k}$
                \If {$l_2 \cap \seg{k} \neq \emptyset$}
                    \State Let $g$ be the only point of $l_2 \cap \seg{k}$.
                    \State Append $g$ to $RetargetingPoints$, and set jump destination of $g$ to $j$.
                    \State \Call{Push}{$k$}
                \EndIf
                \State \textbf{break loop}
            \EndIf
        \EndLoop
    \EndFor
\end{algorithmic}
\end{algorithm}

To show the correctness of this algorithm, we first observe that each reported point must be a retargeting point since the jump destination changes at it.

Indeed, points reported in the outer ``if" branch lie on the common tangent of two successive hyperbola segments $\seg{i}$ and $\seg{j}$ from the convex support, and $\seg{k}$ is the first segment to be intersected by that tangent. The point of the intersection is the boundary between the region of $\seg{k}$ from which $\seg{i}$ is the lowest segment when looking to the left and the region of $\seg{k}$ for which such lowest segment is $\seg{j}$, as shown in Figure~\ref{fig:tangent1}. Thus, a point $g$ reported in this branch has the property that points on $\seg{k}$ just left and just right of the point $g$ have $\seg{j}$ and $\seg{i}$ as their jump destinations, respectively, so it is a retargeting point of the first type.

Inner ``if" branch occurs when the segment $\seg{k}$ is appended to the convex support, in which case there is a point on $\seg{k}$ acting as a boundary between the region of $\seg{k}$ from which no other segment is visible (when looking to the left), and the region of $\seg{k}$ from which $\seg{j}$ is visible, and such point lies on the common tangent of $\seg{j}$ and $\seg{k}$, as shown in Figure~\ref{fig:tangent2}. Therefore, the point $g$ reported in this branch has the property that points on $\seg{k}$ just left and just right of the point $g$ have $\seg{j}$ and $\seg{k}$ as their jump destinations, respectively, so it is a retargeting point of the second type.

Next, let us make sure that no retargeting points were omitted by this algorithm. Consider a retargeting point $g$ lying on $\seg{k}$.

If $g$ is retargeting point of the first type, it must lie on $\Call{Tangent}{i, j}$ for some $i < j < k$ (Figure~\ref{fig:tangent1}). Note that there can be no $\seg{r}$ with $r < k$ such that it reaches below $\Call{Tangent}{i, j}$; otherwise $r$ would be a jumping destination for $g$. That implies that both $\seg{i}$ and $\seg{j}$ are the part of the lower convex hull of the segments left from $\seg{k}$, which further means that $\seg{i}$ and $\seg{j}$ are two consecutive elements of the convex support for the set $\{\seg{j_0-1}, \seg{j_0}, \ldots, \seg{k-1}\}$. Since $\Call{Tangent}{i, j}$ intersects $\seg{k}$, the same must be true for any pair of consecutive segments $\seg{i'}$ and $\seg{j'}$ from the convex support, with $i < i' < j' < k$. Otherwise, there would be three segments from the convex support not in convex position. The algorithm starts from the last two segments in the convex support and moves to previous pairs as long as there is an intersection of the pair's common tangent with $\seg{k}$. That guarantees $g$ will be found and reported as the retargeting point in the outer ``if'' branch.

The second case, when $g$ is retargeting point of the second type, is treated similarly. In this case $g$ lies on $\Call{Tangent}{j, k}$ for some $j < k$ Figure~\ref{fig:tangent2}. There can be no $\seg{r}$ with $r < k$ such that it reaches below $\Call{Tangent}{j, k}$; otherwise $r$ would be a jumping destination for $g$. That implies that $\seg{j}$ is a part of the lower convex hull of the segments left from $\seg{k}$, which further means that $\seg{j}$ is the element of the convex support for the set $\{\seg{j_0-1}, \seg{j_0}, \ldots, \seg{k-1}\}$. Since $\Call{Tangent}{j, k}$ touches $\seg{k}$ from below, common tangent of each pair of consecutive segments $\seg{i'}$ and $\seg{j'}$, with $j \leq i' < j' < k$, from the convex support must intersect $\seg{k}$. Otherwise, there would be three segments from the convex support not in convex position. For the same reason the common tangent of $\seg{j}$ and the segment immediately before it in the convex support cannot intersect $\seg{k}$. The algorithm starts from the last two segments in the convex support and moves to previous pairs as long as there is an intersection of the pair's common tangent with $\seg{k}$. Finally, the algorithm reaches the rightmost pair of two consecutive segments from convex support whose common tangent does not intersect $\seg{k}$. The right segment from that pair is exactly $\seg{j}$. In that moment point $g$ is found and reported as retargeting point in the inner ``if'' branch.

The running time of algorithm is $O(n)$, since each index $k \in \{j_0 + 1, \ldots, j_0 + n\}$ is pushed on stack and popped from stack at most once, and output of $\Call{Tangent}$ and intersections can be computed in constant time. The number of retargeting points reported is, therefore, also $O(n)$.

Retargeting points reported by the algorithm come in order sorted from left to right, which is explained by following observations. Retargeting points reported in a single iteration of the outer for-loop belong to the same hyperbola segment, and segments come in left-to-right order. Retargeting points reported on the same hyperbola segment are also in the left-to-right order: inside the inner loop, $\seg{k}$ is consecutively intersected with lines such that each line is of lower slope than previous and lies beneath it under $\seg{k}$. Hence, each subsequent intersection point lies to the right of the previous one.

This algorithm finds only retargeting points from a single period of the plotted function, but all other retargeting points can be obtained by simply translating these horizontally by the integral number of periods $\mu(P)$.
\end{proof}

\subsubsection{Types of events}

Next, we list all types of events that can happen while moving the lever leftwards, and for each we show how to calculate the value of $t$, the lever position at which the event occurs. We will give a polynomial equation describing each event type, and it will be solved either for $t$ or $o_i(t)$. Once we have $o_i(t)$, it is easy to obtain $t$, as
\begin{equation}
\label{eq:inverseO}
t = \frac{o_i(t)^2 - d_i^2 - m_i^2}{2(o_i(t) - m_i)}.
\end{equation}

In the process of determining $t$, we will repeatedly encounter fixed degree polynomial equations. Solving them can be assumed to be a constant time operation, see \cite{dickenstein2005solving}.

We will also frequently use the following two two utility functions, $c_j(o)$ and $s_j(o)$.

For $o<m_j$, let $c_j(o)$ be such that $H(c_j(o)) := (c_j(o), h(c_j(o)))$ is the contact point of hyperbola $h_j$ and its tangent through the point $(o, 0)$. Given $o$, the value $c_j(o)$ can be calculated by solving the equation $h_j'(c_j(o)) = h_j(c_j(o)) / (c_j(o)-o)$, which results in
\begin{equation}
\label{eq:c}
c_j(o) = \frac{d_j^2 + m_j^2 - m_jo}{m_j - o} .
\end{equation}

Function $s_j(o)$ is defined as the slope of the tangent to hyperbola $h_j$ through the point $(o, 0)$, which is the line through points $(o, 0)$ and $H(c_j(o))$.
\begin{equation}
\label{eq:s}
s_j(o) = \frac{h_j(c_j(o))}{c_j(o)-o} = 1 / \sqrt{\left(\frac{m_j - o}{d_j}\right)^2 + 1} .
\end{equation}

These functions are used when we know that the lever for $t$ is tangent to some hyperbola segment $\seg{j}$. We then know that the lever is touching $\seg{j}$ at the point with coordinate $c_j(o(t))$, and that its slope equals $s_j(o(t))$.

First, we consider jump destination change event, a type of event which is not a state transition event. Nevertheless, it is still necessary to react to events of this kind in order to update a parameter needed for calculating events that do change state.

\begin{itemize}

\item Jump destination change event

Jump destination $j_m:=\jump(c(t))$ changes whenever $C(t)$ passes over some retargeting point. At that moment it is necessary to recalculate all future events which involve jumps, since $j_m$ is used for their calculation. Let $z$ be the next retargeting position, i.e.\ the rightmost one that lies to the left of $c(t_c)$, where $t_c$ is the current lever position. Depending on the lever state, we calculate the event position in one of the following ways.

\begin{itemize}
\item{The current state is $\stateV{i}{j}$}

The jump destination change event cannot occur before leaving this
state since $C(t)$ stands still at the ``wedge tip", so it cannot
pass over any retargeting point.

\item{The current state is $\stateK{i}{j}$}

Here the lever is tangent to $\seg{j}$, so this event can only happen if
$z > m_j$. Otherwise, the lever would have nonpositive slope when
touching the plot at $H(z)$. The equation describing this event is
\[ c_j(o_i(t)) = z ,\]
and it solves to
\[ o_i(t) = \frac{d_j^2-zm_j+m_j^2}{m_j-z} .\]

\item{The current state is $\stateY{i}{j}$}

Right endpoint of the lever slides over $\seg{j}$ and will coincide
with $H(z)$ when
\[ t = z -\mu(P)/2 .\]
\end{itemize}

Note that $j_m$ is not used in the description of the lever state, so, as already noted, jump destination change event does not change the current state.

\end{itemize}

All other events that need to be considered are state transition events.

In the following list we give all possible types of state transition events, and we show how to calculate corresponding $t$ value for each of them.

\begin{itemize}

\item $\state{i}{j}^x \rightarrow \state{i-1}{j}^x$, where $x\in\{\mathcal{Y}, \mathcal{K}, \mathcal{V}\}$

This is the event when the interval to which $t$ belongs changes from $[e_i, e_{i+1})$ to $[e_{i-1}, e_i)$, so this event happens at $e_i$.
\[ t = e_i .\]

\item $\stateY{i}{j} \rightarrow \stateY{i}{j-1}$

Here, the right endpoint of the lever slides continuously from one hyperbola segment to another,
\[ t = e_j - \mu(P) / 2 .\]

\item $\stateY{i}{j} \rightarrow \stateK{i}{j}$ and $\stateK{i}{j} \rightarrow \stateY{i}{j}$

In this event the lever changes from being a tangent to $\seg{j}$ to touching $\seg{j}$ with its right endpoint, or the other way round. The corresponding equation for this event is
\[ c_j(o_i(t)) = t + \mu(P) / 2 ,\]
which can be transformed to a cubic equation in $t$.

Since there can be at most three real solutions to that equation, it is possible that this event takes place at most three times with the same $i$ and $j$. On each occurrence of the event the lever switches between being a tangent and touching the plot with its right endpoint.

\item $\stateY{i}{j} \rightarrow \stateK{i}{j_m}$

This event happens when the lever state changes from having an endpoint on $\seg{j}$ to being a tangent to $\seg{j_m}$. The corresponding equation is

\[ s_{j_m}(o_i(t)) = \frac{h_j(t+\mu(P)/2)}{h_i(t) + \mu(P)/2} ,\]

which further transforms into a polynomial equation in $t$.

The line through $o_i(t)$ with slope $s_{j_m}(o_i(t))$ touches the
hyperbola $h_{j_m}$, but we need to be sure that it actually touches
the segment $\seg{j_m}$ of that hyperbola. It may as well be the
case that $\seg{j_m}$ is not wide enough to have a common point with
the line. More precisely, the first coordinate, $u$, of the touching
point between the line and $h_{j_m}$ must belong to the interval
$[e_{j_m}, e_{j_m+1})$. To get that coordinate, we solve the
equation
\[ h_{j_m}'(u) = s_{j_m}(o_i(t)) .\]
Having in mind that $o_i(t) < m_{j_m} < u$ must hold, we get a single solution
\[ u = m_{j_m} + \frac{d_{j_m}^2}{m_{j_m} - o_i(t)} .\]

If $u \notin [e_{j_m}, e_{j_m+1})$, we do not consider this event.

Checking if the line through $o_i(t)$ with slope $s_{j_m}(o_i(t))$ actually touches the hyperbola segment $\seg{j_m}$ will also be used in the calculation for one other event type, where we will refer to it by the name \emph{collision check}.

\item $\stateY{i}{j} \rightarrow \stateV{i}{j_m}$

The event when the lever state changes from having an endpoint on $\seg{j}$ to touching the wedge between $\seg{j_{m-1}}$ and $\seg{j_m}$ is described by

\[ \frac{h_{j_m}(e_{j_m})}{e_{j_m}-o_i(t)} = \frac{h_j(t+\mu(P)/2)}{h_i(t) + \mu(P)/2} ,\]

which again transforms into a polynomial equation in $t$.

\item $\stateK{i}{j} \rightarrow \stateV{i}{j}$

This event happens when the point in which the lever is touching $\seg{j}$ reaches $e_j$. Here, the lever is tangent to $\seg{j}$, and since it must have a positive slope, this will only happen if $e_j > m_j$. The event equation is
\[ c_j(o_i(t)) = e_j ,\]
which solves to
\[ o_i(t) = \frac{d_j^2-e_jm_j+m_j^2}{m_j-e_j} .\]

\item $\stateK{i}{j} \rightarrow \stateK{i}{j_m}$

This event happens when the lever becomes a tangent to two hyperbola
segments, $\seg{j}$ and $\seg{j_m}$ simultaneously. It can only
happen if $\seg{j_m}$ is lower than $\seg{j}$, i.e.\ $d_{j_m} <
d_j$,
\[ s_{j_m}(o_i(t)) = s_j(o_i(t)) .\]
Since $o_i(t) < m_{j_m}$ and $o_i(t) < m_j$, the only solution is
\[ o_i(t) = \frac{d_j m_{j_m} - d_{j_m} m_j}{d_j - d_{j_m}} .\]

Here we again need to apply the collision check described earlier to see if the common tangent actually touches $\seg{j_m}$. If the test fails, we do not consider this event.

\item $\stateK{i}{j} \rightarrow \stateV{i}{j_m}$

The event in which the lever touches the wedge tip at $e_{j_m}$
while being a tangent to $h_j$ is represented by the following
equation,

\[ \frac{h_{j_m}(e_{j_m})}{e_{j_m}-o_i(t)} = s_j(o_i(t)). \]

This can only happen if $h_{j_m}(e_{j_m}) < d_j$. It can be
transformed to a quadratic equation in $o_i(t)$. The two solutions
correspond to two tangents to $h_j$ from the point $(e_{j_m},
h(e_{j_m}))$. The smaller of the two solutions is where this event
happens.

\item $\stateV{i}{j} \rightarrow \stateY{i}{j-1}$
This event happens when the lever stops touching the tip of the
wedge and starts to slide its right endpoint over the hyperbola
segment on the left of the wedge,
\[ t = e_j - \mu(P)/2 .\]

\item $\stateV{i}{j} \rightarrow \stateK{i}{j-1}$

This event happens when the lever stops touching the tip of the
wedge and becomes a tangent of the hyperbola segment on the left of
the wedge. This can only happen if $e_j > m_{j-1}$,
\[ c_{j-1}(o_i(t)) = e_j ,\]
which solves to
\[ o_i(t) = \frac{d_{j-1}^2-e_jm_{j-1}+m_{j-1}^2}{m_{j-1}-e_j} .\]

\item $\stateV{i}{j} \rightarrow \stateK{i}{j_m}$

This event happens when the lever stops touching the tip of the
wedge and becomes a tangent of the hyperbola segment $\seg{j_m}$,

\[ s_{j_m}(o_i(t)) = \frac{h_j(e_j)}{e_j - o_i(t)}. \]

This can only happen if $h_j(e_j) > d_{j_m}$. From that we get a
quadratic equation in $o_i(t)$. The two solutions correspond to two
tangents to $h_{j_m}$ from the point $(e_j, h(e_j))$. The smaller of
the two solutions is where this event happens.

\item $\stateV{i}{j} \rightarrow \stateV{i}{j_m}$

This event happens when the lever touches two wedges, at points $E_{j_m}$ and $E_j$ simultaneously. The condition for that is

\[ \frac{h_j(e_j)}{e_j - o_i(t)} = \frac{h_{j_m}(e_{j_m})}{e_{j_m} - o_i(t)} ,\]

which is a linear equation in $o_i(t)$.

\end{itemize}

\subsubsection{Sequence of states}

We want to efficiently find the sequence of states through which the
lever will pass on its leftward journey, together with the positions
where the state changes happen. Let the obtained sequence be $p_1,
\St_1, p_2, \St_2, p_3, \ldots, p_r, \St_r$, where $p_1 \leq p_2
\leq \ldots \leq p_r$. Each state $\St_k$ occurs when the lever
position is exactly between $p_k$ and $p_{k+1}$, where $p_{r+1} =
p_1 + \mu(P)$. We call this \emph{the sequence of realized states}.

To calculate the sequence of realized states, we will start from a
specific lever position that has a known state. Let $p_{low}$ be any
of the values for which $h$ attains its global minimum, and let
$\seg{j_0}$ be the hyperbola segment above it. The algorithm
starts with the lever in position $t_c = t_0 = p_{low} - \mu(P)/2$.
This lever has its right endpoint on the plot at the point
$H(p_{low})$, which means that its state is $\stateY{i_0}{j_0}$,
where $i_0$ is the index of hyperbola segment over $t_0$. We note
that $p_{low}$ is also a retargeting position, so we also know our
initial jump destination.

The algorithm then iterates with the following operations in its
main loop. It first calculates all possible events that could happen
while in the current state. Among those events let $E$ be the one
with the largest $t$ that is not larger than $t_c$. It is
the event that must occur next. The algorithm sets $t_c$ to $t$, and
it either updates jump destination if $E$ is jump destination change
event, or switches to the new state if the event is state transition
event. In the latter case, position $t$ and the new state are added
to the sequence of realized states. These operations are iterated
until one full period of the plot is swept, ending with $t_c = t_0 -
\mu(P)/2$ in state $\stateY{i_0-n}{j_0-n}$. Described procedure is given in Algorithm~\ref{alg:SlidingLever}.

\begin{algorithm}[h]
\caption{Sliding Lever Algorithm}
\label{alg:SlidingLever}
\begin{algorithmic}
    \State Find $p_{low}$ and $j_0$.
    \State $t_0 \leftarrow p_{low} - \mu(P)/2$
    \State Find $i_0$.

    \State Run \Call{Retargeting Points}{} to find retargeting points and their jump destinations.

    \State $t_c \leftarrow t_0$
    \State $i \leftarrow i_0$
    \State $j \leftarrow j_0$
    \State $j_m \leftarrow$ jump destination of $p_{low}$

    \State Set the current state to $\stateY{i_0}{j_0}$.
    \State Add $t_0$ and $\stateY{i_0}{j_0}$ to the sequence.

    \While {$t_c > t_0 - \mu(P)$}
        \State Calculate all the events for the current state. Ignore jumping events if $j = j_m$.
        \State Let $E$ be the first event to happen (the one with the largest $t$ not larger than $t_c$).
        \State $t_c \leftarrow$ $t$ of the event $E$.
        \If {$E$ is jump destination change event}
            \State Update $j_m$.
        \Else
            \State Set the current state to the destination state of $E$.
            \State Add $t$ and the current state to the sequence.
            \If {$E$ is a jumping event}
                \State $j_m \leftarrow j$
            \EndIf
        \EndIf
    \EndWhile
\end{algorithmic}
\end{algorithm}

\begin{theorem}
\label{thm:LinearSequenceComplexity}
\Call{Sliding Lever Algorithm}{} finds the sequence of realized states in $O(n)$ time. The length of the produced sequence is $O(n)$.
\end{theorem}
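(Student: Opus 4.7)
The plan has two parts: correctness of \Call{Sliding Lever Algorithm}{}, and the linear bound on running time and output size. Both rest on the monotonicity lemma and a case analysis over the enumerated event types.

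For correctness, I would first verify that the initial configuration is genuinely $\stateY{i_0}{j_0}$ at $t_0 = p_{low} - \mu(P)/2$: the right endpoint of the lever sits at $H(p_{low})$, and since $p_{low}$ is a global minimum of $h$ the segment $O(t_0)H(p_{low})$ is a valid lever touching the plot only at $H(p_{low})$. Then I would argue by induction over events that the state stored by the algorithm coincides with the true state of the lever at $t_c$. The inductive step uses Lemma \ref{lem:monotonicity}: as $t$ decreases continuously from $t_c$ the values $o(t)$ and $c(t)$ move monotonically leftward, so the current state is preserved until one of the listed geometric configurations is reached; the algorithm selects the rightmost candidate event not exceeding $t_c$, which is exactly the true next transition. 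This requires one auxiliary observation, namely that the enumerated list of state transition events is exhaustive (every possible way for $\mathcal{K}/\mathcal{Y}/\mathcal{V}$ tangency or the phase indices to change at a critical $t$ appears in the list).

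For the linear bound, I would partition the events into three groups. (i) Transitions that decrement $i$ occur exactly $n$ times per period, because $i$ indexes the subinterval of $t$ and $t$ sweeps across $n$ such intervals. (ii) Transitions that change $j$, including every jumping event, decrease $j$ by at least one each; by monotonicity (Lemma \ref{lem:monotonicity}) the total decrement of $j$ over one full sweep is exactly $n$, hence at most $n$ such transitions. (iii) Transitions that keep both $i$ and $j$ fixed: within a phase $(i,j)$ the only possibilities are toggles $\stateK{i}{j} \leftrightarrow \stateY{i}{j}$, which satisfy $c_j(o_i(t)) = t + \mu(P)/2$, a cubic in $t$ with at most three real roots, and the single $\stateK{i}{j} \to \stateV{i}{j}$ event. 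Every transition out of $\stateV{i}{j}$ in the enumeration leaves the phase, so each phase is visited with at most four within-phase transitions. Combined with (i) and (ii), the number of phases is $O(n)$, so within-phase transitions total $O(n)$. Jump destination change events are in bijection with retargeting positions crossed, and by Theorem \ref{thm:RetargetingPositions} there are $O(n)$ of those per period.

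Each event is processed in $O(1)$ time: the $O(1)$ candidate equations listed before the theorem are polynomial of fixed degree and can be solved in constant time, and selecting the largest root $\leq t_c$ among the $O(1)$ candidates is $O(1)$. Preprocessing (locating $p_{low}$, $j_0$, $i_0$, and running \Call{Retargeting Points}{}) is $O(n)$. Summing these contributions yields both the claimed $O(n)$ running time and the $O(n)$ bound on the length of the sequence of realized states. The main obstacle I anticipate is the within-phase bound: one must both verify that the only transitions preserving $(i,j)$ are the cubic $\mathcal{K} \leftrightarrow \mathcal{Y}$ event and $\mathcal{K} \to \mathcal{V}$, and then exploit the algebraic degree of that cubic to cap the count at three. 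A secondary subtlety is confirming that the $\mathcal{V}$ state is entered at most once per phase, so that it cannot contribute additional within-phase transitions.
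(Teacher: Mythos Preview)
Your proposal is correct and follows essentially the same approach as the paper: bound the number of state transitions by the monotone total decrease of $i$ and $j$ (each by $n$ over one period), handle the $\stateK{i}{j}\leftrightarrow\stateY{i}{j}$ toggles via the degree-three bound on the cubic $c_j(o_i(t))=t+\mu(P)/2$, and invoke Theorem~\ref{thm:RetargetingPositions} for the $O(n)$ bound on jump destination change events. Your three-way partition (i)/(ii)/(iii) and the explicit treatment of $\stateK{i}{j}\to\stateV{i}{j}$ as a within-phase transition that cannot recur (since every exit from $\stateV{i}{j}$ leaves the phase) is in fact slightly more careful than the paper's own wording, which glosses over that transition; but the argument is the same in substance.
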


\begin{proof}

During the execution of the algorithm we must encounter all realized states, since states can change only at events and by always choosing the first following possible event to happen, we eventually consider all realized events. Realized states are encountered in order, since $t_c$ is never increasing.

While choosing the following event we did not consider the possibility that there can be several events with the same, minimal, $t$. However, if that happens we can choose an arbitrary one to be the next event. This choice can influence the output sequence only by including or excluding some states of the length zero. Importantly, such zero-length states are irrelevant for further considerations, and no other state in the output of the algorithm is influenced by this choice.

Each event is either jump destination change event or state transition event. From Theorem~\ref{thm:RetargetingPositions} we have that there are $O(n)$ jump destination change events, and now we will show that there are $O(n)$ state transition events.

Each state transition event transitioning from some $\state{i}{j}$ state decreases either $i$, or $j$ or both. The only exception are the events $\stateK{i}{j} \rightarrow \stateY{i}{j}$ and $\stateY{i}{j} \rightarrow \stateK{i}{j}$, however those events can happen at most three times in total for the same $i$ and $j$. Note that $j_m \leq j$, but when $j_m = j$, we do not consider events involving $j_m$. Variables $i$ and $j$ start with values $i_0$ and $j_0$, and, after the loop finishes, they are decreased to $i_0 - n$ and $j_0 - n$. Hence, no more than $O(n)$ state transition events occurred, implying the linear length of the sequence of realized states.

Calculation of each state transition event takes a constant time, at each iteration
there is a constant number of potential events considered, and loop is
iterated $O(n)$ times. To find next jump destination change event,
we move through the sorted list of retargeting positions until we
find the first retargeting position not greater than $t_c$. The
total time for calculating jump destination change events, over all
iterations, is linear. Therefore, the running time of the whole
algorithm is also linear.
\end{proof}

\section{Merging the two dilations}
\label{sec:MergingTheTwoDilations}

Knowing the sequence of realized states is sufficient to determine
the exact lever slope at any position. Remember, the left lever
slope at position $t$ is the inverse of the left dilation via
$P(t)$, as shown in (\ref{eq:dilationSlopeConnection}). But, to know
the dilation via some point we need both left and right dilations
via that point (\ref{eq:dilationIsMaximum}).

Our sliding lever algorithm was initially designed only for left
dilation, but an analogous algorithm can obviously be designed for
the right dilation (or, we can perform the \emph{exact} same
algorithm for the left dilation on the mirror image of the polygon
$P$, and then transform obtained results appropriately). This
implies the concept of the right dilation lever for $t$ (as opposed
to the left dilation lever, or just lever, as we have been calling
it until now), which has negative slope and touches the plot on the
left side of $t$. We will use $+$ and $-$ in superscript denoting
relation with left and right dilation, respectively.

Let $p^+_1, \St^+_1, p^+_2, \St^+_2, p^+_3, \ldots, p^+_{r^+}, \St^+_{r^+}$ and $p^-_1, \St^-_1, p^-_2, \St^-_2, p^-_3, \ldots, p^-_{r^-}, \St^-_{r^-}$ be the sequences of realized states for left and right dilation, respectively, where both sequences $p^+$ and $p^-$ are in nondecreasing order. For simplicity, let us call them the \emph{left} and the \emph{right} sequence, respectively. States for right dilation are described by $\state{i}{j}$ notation as well, with the meaning analogous to the meaning of the notation for left dilation states. We say that the (right dilation) lever for $t$ is in the state $\state{i}{j}$, when $\seg{i}$ is the hyperbola segment above $t$, and the (right dilation) lever touches the hyperbola segment $\seg{j}$.

We now merge the two obtained sequences by overlapping them into a new sequence $p_1, \St_1, p_2, \St_2, p_3, \ldots, p_r, \St_r$. In the merged sequence, $p_1 \leq p_2 \leq \ldots \leq p_r$ is the sorted union of $\{p^+_1, p^+_2, \ldots, p^+_{r^+}\}$ and $\{p^-_1, p^-_2, \ldots, p^-_{r^-}\}$. States in the merged sequence are pairs consisting of one state from the left sequence and one state from the right sequence. Each state $S_k = (\St^+_{k^+}, \St^-_{k^-})$ from the merged sequence is such that $\St^+_{k^+}$ and $\St^-_{k^-}$ are states covering the interval between $p_k$ and $p_{k+1}$ in the left and in the right sequence, respectively.

By Theorem~\ref{thm:LinearSequenceComplexity}, both $r^+$ and $r^-$
are $O(n)$, so the length of the merged sequence is also linear in
$n$. Hence, the merged sequence can be computed in $O(n)$ time.

For each state $\St_k = (\St^+_{k^+}, \St^-_{k^-})$ there is a single expression for computing the lever slope as a function of $t$, when $p_k \leq t \leq p_{k+1}$, both for the left dilation and for the right dilation. To find minimal dilation while in that state, we want to find $t$ which maximizes the minimum of the two slopes for left and right lever. This observation readily follows from (\ref{eq:dilationIsMaximum}) and (\ref{eq:dilationSlopeConnection}), so

\begin{equation}
\label{eq:minInCombinedState}
    \min_{p_k \leq t \leq p_{k+1}} \dil_{P(t)} = \frac{1}{\max_{p_k \leq t \leq p_{k+1}} \min \{\s^+(t), \s^-(t)\}},
\end{equation}

where $\s^+(t)$ is the slope for the left dilation lever for $t$, and
$\s^-(t)$ is the slope for the right dilation lever for $t$.

Let us analyze the shape of the functions $\s^+(t)$ and $\s^-(t)$. Assume that the corresponding state to which $t$ belongs is $\St = (\St^+, \St^-)$.

Let $s^+_{\stateK{i}{j}}(t)$ be a function which maps $t$ to the slope of a lever, assuming that the lever is in $\stateK{i}{j}$ state. Analogously we define $s^+_{\stateV{i}{j}}(t)$, $s^+_{\stateY{i}{j}}(t)$, $s^-_{\stateK{i}{j}}(t)$, $s^-_{\stateV{i}{j}}(t)$ and $s^-_{\stateY{i}{j}}(t)$.

\begin{lemma}
\label{lem:sKmonotonicity} If $\St^+$ is a $\stateK{i}{j}$ state,
then $\s^+(t)$ is a monotonically nondecreasing function.
\end{lemma}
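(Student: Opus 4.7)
The plan is to express $\s^+(t)$ inside a $\stateK{i}{j}$ segment as a composition of two monotone functions and then compose. By the definition of the state $\stateK{i}{j}$, the lever has its left endpoint at $O_i(t) = (o_i(t), 0)$ and is tangent to $\seg{j}$, so equation~(\ref{eq:s}) gives
\[ \s^+(t) = s_j(o_i(t)), \qquad o_i(t) = t - h_i(t). \]

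The first step is to show that $o_i$ is strictly increasing on $[e_i, e_{i+1})$. Since $h_i(t) = \sqrt{(t - m_i)^2 + d_i^2}$, a direct differentiation yields $|h_i'(t)| < 1$ everywhere, hence $o_i'(t) = 1 - h_i'(t) > 0$. This is essentially the ingredient already used in the proof of Lemma~\ref{lem:plotSlope}.

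The second step is to differentiate the closed form for $s_j$ given in~(\ref{eq:s}):
\[ s_j'(o) = \frac{m_j - o}{d_j^2 \bigl(((m_j - o)/d_j)^2 + 1\bigr)^{3/2}}, \]
which is strictly positive precisely when $o < m_j$.

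What remains is to verify that $o_i(t) < m_j$ throughout the $\stateK{i}{j}$ segment, so that we are indeed in the monotone regime of $s_j$. This is the only place where the geometry is actually used: the lever slope lies in $(0, 1]$ (see the remark after~(\ref{eq:dilationSlopeConnection})), and a tangent from $(o, 0)$ to the hyperbola $h_j$ has positive slope if and only if the point of tangency lies to the right of the vertex at $t = m_j$, i.e., iff $o < m_j$. Composing the two strictly increasing maps $t \mapsto o_i(t)$ and $o \mapsto s_j(o)$ then gives that $\s^+(t) = s_j(o_i(t))$ is strictly increasing on every $\stateK{i}{j}$ segment, which is more than enough for the claimed nondecreasing conclusion. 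The only mildly nontrivial step is the sign check $o_i(t) < m_j$; the rest are one-line computations.
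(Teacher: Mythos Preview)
Your argument is correct and follows essentially the same route as the paper's proof: write $\s^+(t) = s_j(o_i(t))$ via~(\ref{eq:s}), observe that $s_j$ is increasing on $\{o < m_j\}$, check that $o_i(t) < m_j$ holds in state $\stateK{i}{j}$, and compose with the monotonicity of $o_i$. The only differences are cosmetic --- you differentiate $o_i$ directly rather than invoking Lemma~\ref{lem:monotonicity}, and you supply an explicit geometric reason for $o_i(t) < m_j$ (positivity of the lever slope) where the paper simply asserts it.
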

\begin{proof}
If $\St^+$ is an $\stateK{i}{j}$ state, then, from equation
(\ref{eq:s}), we have
$$
\s^+(t) = s^+_{\stateK{i}{j}}(t) = s_j(o_i(t)) = \frac{h_j(c_j(o_i(t)))}{c_j(o_i(t))-o_i(t)} =
1 / \sqrt{\left(\frac{m_j - o_i(t)}{d_j}\right)^2 + 1} .
$$
We see that function $s_j$ is monotonically increasing for parameter values less than $m_j$. In the specified state, $o_i(t) < m_j$ holds, and since $o(t)$ is monotonically nondecreasing (Lemma~\ref{lem:monotonicity}), it means that combination of $s_j$ and $o(t)$, which is $\s^+(t)$, is monotonically nondecreasing as well.
\end{proof}

\begin{lemma}
\label{lem:sVmonotonicity} If $\St^+$ is a $\stateV{i}{j}$ state,
then $\s^+(t)$ is a monotonically nondecreasing function.
\end{lemma}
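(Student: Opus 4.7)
The plan is to mimic the proof of Lemma~\ref{lem:sKmonotonicity}, but with the fixed wedge tip $H(e_j)$ playing the role of the tangency point. In state $\stateV{i}{j}$ the lever passes through the point $H(e_j) = (e_j, h(e_j))$, with its left endpoint at $O(t) = (o_i(t), 0)$. So I would first write down the explicit formula
\[
\s^+(t) = s^+_{\stateV{i}{j}}(t) = \frac{h(e_j)}{e_j - o_i(t)},
\]
which is the slope of the line through $(o_i(t),0)$ and the fixed point $(e_j, h(e_j))$.

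Next I would observe that in this state the relevant quantities have definite signs: $h(e_j) > 0$ since the plot lies strictly above the $t$-axis, and $e_j - o_i(t) > 0$ because the lever has nonnegative slope (indeed $\s^+(t) \in (0,1]$ as noted after equation~(\ref{eq:dilationSlopeConnection})), so the wedge tip must lie strictly to the right of $O(t)$. Thus $\s^+(t)$ is a positive constant divided by a strictly positive quantity $e_j - o_i(t)$, and the map $x \mapsto h(e_j)/(e_j - x)$ is monotonically increasing on $(-\infty, e_j)$.

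I would then invoke Lemma~\ref{lem:monotonicity}, which asserts that $o(t)$, and hence $o_i(t)$ throughout the interval where $\seg{i}$ is active, is a monotonically nondecreasing function of $t$. Composing the increasing function $x \mapsto h(e_j)/(e_j - x)$ with the nondecreasing function $t \mapsto o_i(t)$ yields a nondecreasing function, which is exactly $\s^+(t)$. This gives the claim.

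The main (and only real) obstacle is the sign verification $e_j > o_i(t)$; once that is in place the proof is a one-line composition argument. The sign issue is handled by the remark following equation~(\ref{eq:dilationSlopeConnection}) and by the definition of state $\stateV{i}{j}$, which requires the lever to be dragged leftward from a point strictly to the right of the wedge and thus to have the wedge tip strictly to the right of $O(t)$.
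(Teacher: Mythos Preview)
Your proof is correct and essentially identical to the paper's: both write $\s^+(t) = h_j(e_j)/(e_j - o_i(t))$, note that this is increasing in $o_i(t)$ since $o_i(t) < e_j$, and then compose with the monotonicity of $o(t)$ from Lemma~\ref{lem:monotonicity}. Your version is merely a bit more explicit about why $e_j > o_i(t)$ holds, which the paper simply asserts as a property of the state.
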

\begin{proof}
If $\St^+$ is an $\stateV{i}{j}$ state, then we have
$$
\s^+(t) = s^+_{\stateV{i}{j}}(t) = \frac{h_j(e_j)}{e_j - o_i(t)}.
$$

We see that $\s^+(t)$ is monotonically increasing in terms of $o_i(t)$
when $o_i(t) < e_j$, which holds in the specified state. Since $o(t)$ is
monotonically nondecreasing (Lemma~\ref{lem:monotonicity}), it means that
$\s^+(t)$ is monotonically nondecreasing in terms of $t$ as well.
\end{proof}


Similar observations hold for the right dilation analogues: $\s^-(t)$
is monotonically decreasing if $\St^-(t)$ is $\stateK{i}{j}$ or
$\stateV{i}{j}$ state.

\begin{lemma}
\label{lem:YnoConsider}
If $\St^-$ is a $\stateY{i}{j}$ state then $\s^+(t) \leq \s^-(t)$.
\end{lemma}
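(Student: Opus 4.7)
The plan is to compute $\s^-(t)$ exactly from the hypothesis that $\St^-$ is a $\stateY{i}{j}$ state, invoke periodicity of $h$ to put it in a form that matches a natural upper bound on $\s^+(t)$, and then obtain that upper bound by evaluating the defining minimum of $\s^+(t)$ at a single parameter.

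First I would unwind the right-dilation analogue of the $\stateY{i}{j}$ state. Mirroring the construction in Section~\ref{sec:SlidingLeverAlgorithm}, the right-dilation lever for $t$ emanates from the point $(t+h(t),0)$ and extends leftward to the plot point at parameter $t-\mu(P)/2$, carrying negative slope. Being in $\stateY{i}{j}$ means the lever touches the plot only at its far left endpoint $H(t-\mu(P)/2)$, which lies on $\seg{j}$. Computing its slope magnitude directly gives
$$\s^-(t) = \frac{h(t-\mu(P)/2)}{(t+h(t)) - (t-\mu(P)/2)} = \frac{h(t-\mu(P)/2)}{h(t) + \mu(P)/2}.$$
Invoking the $\mu(P)$-periodicity of the plot recorded in Section~\ref{sec:AnotherViewOfTheProblem}, namely $h(t-\mu(P)/2) = h(t+\mu(P)/2)$, rewrites this as
$$\s^-(t) = \frac{h(t+\mu(P)/2)}{h(t) + \mu(P)/2}.$$

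Next I would upper-bound $\s^+(t)$ by instantiating the extended variational definition
$$\s^+(t) = \min_{t_r \in [o(t),\, t+\mu(P)/2]} s(t, t_r)$$
at the rightmost allowed value $t_r = t + \mu(P)/2$. Using $o(t) = t - h(t)$, this specialization yields
$$\s^+(t) \leq s(t,\, t+\mu(P)/2) = \frac{h(t+\mu(P)/2)}{\mu(P)/2 + h(t)},$$
which coincides with the expression just derived for $\s^-(t)$, giving $\s^+(t) \leq \s^-(t)$.

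The only real obstacle I anticipate is notational bookkeeping: setting up the mirror-image right-dilation lever, the analogue $O^-(t)=(t+h(t),0)$ of $O(t)$, and the precise meaning of a right-dilation $\stateY{i}{j}$ state carefully enough to justify the closed form of $\s^-(t)$ in the first step. Once those mirror definitions are in place, the $\mu(P)$-periodicity of $h$ together with a single-point evaluation of the minimum defining $\s^+(t)$ reduces the lemma to a one-line identity.
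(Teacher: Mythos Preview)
Your argument is correct and essentially identical to the paper's own proof: both compute $\s^-(t)$ from the $\stateY{i}{j}$ hypothesis as $h(t-\mu(P)/2)/(h(t)+\mu(P)/2)$, invoke the $\mu(P)$-periodicity of $h$ to rewrite the numerator as $h(t+\mu(P)/2)$, recognize the result as $s(t,\,t+\mu(P)/2)$, and bound $\s^+(t)$ above by this single evaluation of its defining minimum. The only cosmetic difference is that the paper works with the indexed hyperbolas $h_i$, $h_j$, $h_{j+n}$ rather than the global function $h$, and uses the original interval $[t,\,t+\mu(P)/2]$ instead of the extended one; neither change affects the logic.
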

\begin{proof}
From equation (\ref{eq:dilationSlopeConnection}), using the fact that $h_j(t) = h_{j+n}(t + \mu(P))$ holds because of the periodicity of the plot, we have
\begin{equation*}
\begin{aligned}
\s^-(t) &= \frac{h_j(t - \mu(P)/2)}{h_i(t) + \mu(P)/2} \\
        &= \frac{h_{j+n}(t + \mu(P)/2)}{h_i(t) + \mu(P)/2} \\
        &= s(t, t + \mu(P)/2) \\
        &\geq \min_{t_r \in [t, t + \mu(P)/2]} s(t, t_r) \\
        &= \s^+(t).
\end{aligned}
\end{equation*}
\end{proof}

Analogously, if $\St^+$ is a $\stateY{i}{j}$ state then $\s^-(t) \leq \s^+(t)$.

We need to calculate $\max_{p_k \leq t \leq p_{k+1}} \min \{\s^+(t), \s^-(t)\}$ in (\ref{eq:minInCombinedState}), which is equivalent to finding the highest point of the lower envelope of the functions $\s^+(t)$ and $\s^-(t)$, see Figure~\ref{fig:8}. This calculation depends on the types of the states $\St^+$ and $\St^-$. We analyze nine possible type combinations.

\fig{fig:8}{$\max_{p_k \leq t \leq p_{k+1}} \min \{\s^+(t), \s^-(t)\}$}{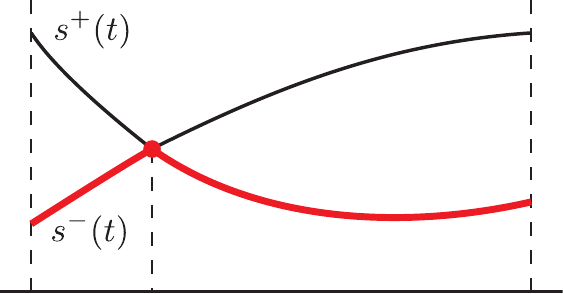}

\begin{itemize}
    \item If $\St^+$ is $\stateY{i}{j^+}$ state and $\St^-$ is $\stateY{\i}{j^-}$
    state:

    Using Lemma~\ref{lem:YnoConsider} we get $\s^+(t) = \s^-(t)$, so
    \[\max_{p_k \leq t \leq p_{k+1}} \min \{\s^+(t), \s^-(t)\} = \max_{p_k \leq t \leq p_{k+1}} s^+_{\stateY{i}{j^+}}(t).\]
    The maximum is achieved either at interval ends or at local maxima, if one exists, which is found by solving a polynomial equation.

    \item  If $\St^+$ is $\stateK{i}{j^+}$ state and $\St^-$ is $\stateY{i}{j^-}$ state:

    Using Lemma~\ref{lem:YnoConsider} and Lemma~\ref{lem:sKmonotonicity} we get
    \[\max_{p_k \leq t \leq p_{k+1}} \min \{\s^+(t), \s^-(t)\} = \max_{p_k \leq t \leq p_{k+1}} s^+_{\stateK{i}{j^+}}(t) = s^+_{\stateK{i}{j^+}}(p_{k+1}).\]

    \item  If $\St^+$ is $\stateK{i}{j^+}$ state and $\St^-$ is $\stateK{i}{j^-}$ state:

    From Lemma~\ref{lem:sKmonotonicity} we know that $s^+_{\stateK{i}{j^+}}(t)$ is monotonically nondecreasing, and $s^-_{\stateK{i}{j^-}}(t)$ is monotonically nonincreasing. The highest point of the lower envelope of their plot on $[p_k, p_{k+1}]$ is thus located either at one of the interval endpoints, or at the point of the intersection of two plots, which can be found by solving a polynomial equation.

\end{itemize}

The other six combinations of state types are not listed, but each of them is resolved in a manner similar to the one of the above three combinations. Cases with $\stateV{\cdot}{\cdot}$ are resolved analogously to cases that have $\stateK{\cdot}{\cdot}$ instead, by using Lemma~\ref{lem:sVmonotonicity} in place of Lemma~\ref{lem:sKmonotonicity}, and the remaining cases are analogous to the cases having ``pluses'' and ``minuses'' swapped.

Finally, by taking the smallest of all dilation minima from $[p_k, p_{k+1}]$ intervals for $k \in \{1, 2, \ldots r\}$ we obtain the overall minimum dilation,
\[ \delta = \min_{k \in \{1,2,\ldots,r\}} \min_{p_k \leq t \leq p_{k+1}} \dil_{P(t)}.\]

While going through calculated interval minima we maintain the value of $t$ for which the minimum is achieved, so we also get the point on $P$ which is the endpoint of the optimal feed-link.

\section{Conclusion}

The problem we considered asked for the optimal extension of polygonal network by connecting a specified point to the rest of the network via a feed-link. We gave a linear time algorithm for solving this problem, thus improving upon previously best known result of Aronov et al.\ presented in~\cite{aronov2012connect}.

On the way to solution, we performed several steps. First, we divided the concept of dilation into the left and right dilation, which can be analyzed separately. Then we transformed them into the problem which considers plot of the distance function and lever slopes. An algorithm for event based simulation of lever movement is given. The output of the algorithm is description of the changes in lever slope presented as a sequence of states, each of which can be expressed analytically. Finally, we explained how those state sequences for left and right dilation can be merged and how the optimal feed-link can be found from it.

The method we used for solving the original problem can easily be adapted to work with any network shaped as an open polygonal chain.

Aronov et al.\ in~\cite{aronov2012connect} discuss polygons with obstacles. They show how $b$ obstacles induce a partition of the polygon boundary of the size $O(nb)$. Each segment of that partition has a distance function to $p$ similar to function $h_i(t)$, the only difference being a constant added outside the square root. This makes our hyperbola segments in the plot to shift upward by that constant, so the problem with obstacles looks very similar to the one without them. It is thus reasonable to expect that our method for solving the original problem could be adapted to solve the case with obstacles in $O(nb)$ time.

One generalization of the problem is when polygon edges are not line segments, but some other curves (i.e.\ second order curves). The abstraction behind our method can be applied in this case if there is an efficient way to determine event times and to find optimal values in the merged sequence. It would also be interesting to see whether a similar method can be applied to a network which is not necessarily polygonal, that is, when some vertices can have degree greater than two.

\bibliographystyle{unsrt}
\bibliography{references}

\end{document}